\newcommand{\field}[1]{\mathbb{#1}}
\newcommand{\F}{\field{F}}
\begin{document}

\title*{Fractional Repetition and Erasure Batch Codes}
% Use \titlerunning{Short Title} for an abbreviated version of
% your contribution title if the original one is too long
\author{Natalia Silberstein}
% Use \authorrunning{Short Title} for an abbreviated version of
% your contribution title if the original one is too long
\institute{N. Silberstein is with the Department of Computer Science, Technion, Haifa 32000,
%\at Technion -- Israel Institute of Technology, Haifa 32000,
Israel, \email{natalys@cs.technion.ac.il}\\
This research was supported in part by the Fine Fellowship and by the Israeli Science Foundation (ISF), Jerusalem, Israel, under Grant 10/12.}

%\and Name of Second Author \at Name, Address of Institute \email{name@email.address}
%
% Use the package "url.sty" to avoid
% problems with special characters
% used in your e-mail or web address
%
\maketitle

\abstract{
 Batch codes are a family of codes that represent a distributed storage system (DSS) of $n$ nodes so that any batch of $t$ data symbols can be retrieved by reading at most one symbol from each node. Fractional repetition codes are a family of codes for DSS that enable efficient uncoded repairs of failed nodes. In this work these two families of codes are combined to obtain fractional repetition batch (FRB) codes which provide both uncoded repairs and parallel reads of subsets of stored symbols. In addition, new batch codes which can tolerate node failures are considered. This new family of batch codes is called erasure combinatorial batch codes (ECBCs). Some properties of FRB codes and ECBCs and examples of their constructions based on transversal designs and affine planes are presented.
}
%\begin{comment}Each chapter should be preceded by an abstract (10--15 lines long) that summarizes the content. The abstract will appear \textit{online} at \url{www.SpringerLink.com} and be available with unrestricted access. This allows unregistered users to read the abstract as a teaser for the complete chapter. As a general rule the abstracts will not appear in the printed version of your book unless it is the style of your particular book or that of the series to which your book belongs.\newline\indent
%Please use the 'starred' version of the new Springer \texttt{abstract} command for typesetting the text of the online abstracts (cf. source file of this chapter template \texttt{abstract}) and include them with the source files of your manuscript. Use the plain \texttt{abstract} command if the abstract is also to appear in the printed version of the book.
%\end{comment}

\keywords{Fractional repetition codes, batch codes, transversal designs, affine planes}
%=====================================================================================================================================
\vspace{-.2cm}
\section{Introduction}
\label{sec:Intro}
In distributed storage systems (DSS) information is stored across a network of nodes in such a way that a user (data collector) can retrieve the stored data even if some system nodes fail.
To provide reliability against node failures, data redundancy based on different types of erasure codes is introduced in such systems.
Moreover, to provide an efficient repair of a single failed node (the most common case in DSS), a new family of erasure codes for DSS, called \emph{regenerating} codes, was presented in~\cite{dimakis}. Two types of regenerating codes, \emph{minimum storage regenerating} (MSR) and \emph{minimum bandwidth regenerating} (MBR)~\cite{dimakis} codes, were introduced to optimize the storage overhead and repair bandwidth, respectively (for constructions see~\cite{dimakis,DRWS11,RSK11,SRKR12_transfer} and references therein).
In particular, a regenerating code $C$ is used to store a file on $n$ nodes, where each node stores $\alpha$ symbols from a finite field $\F_q$, such that a data collector can recover the stored file from any set of $k<n$ nodes. A single failed node can be repaired by downloading $\beta\leq \alpha$ symbols from any node in a set of size $d$, $k\leq d\leq n-1$, of surviving nodes. Note that any random set of $d$ nodes can be used to repair a failed node.

\emph{Fractional repetition} (FR) codes~\cite{ElrRam} are a family of codes for DSS which allow for uncoded repairs (no decoding is needed), while relaxing
the requirement of random $d$-set for repairs by making it table based instead. This relaxation allows for increasing the amount of data that can be stored by using FR codes when compared to MBR codes, while having the same repair bandwidth.
When an $(n,k,\alpha,\rho)$ FR code $C$ is used to store a
file $\bf f$ $ \in \mathbb{F}_q^M$ of size $M$, $\bf f$ is first encoded  to a codeword $c_{\bf f}$ of a $(\theta,M)$ maximum distance separable (MDS) code~\cite{MWSl78}, with $\theta=n \alpha/\rho$. Next, $\theta$ symbols of the MDS codeword $c_{\bf f}$ are placed on $n$ nodes, each of size $\alpha$, as follows. Let $N_1,\ldots, N_n$ be a collection of subsets of size $\alpha$ of the set $[\theta]:=\{1,2,\ldots,\theta\}$, such that every element in $[\theta]$ appears in exactly $\rho$ subsets. Then node $i$ stores the symbols of $c_{\bf f}$ indexed by the subset $N_i$. An FR code should satisfy the requirement that from any set of $k$ nodes it is possible to reconstruct the stored file, that is, $M=\min_{|I|=k}|\cup_{i\in I}N_i|$.  Note that for FR codes it holds that $\alpha=d$ and $\beta=1$, since when some node $i$ fails, it can be repaired by using $\alpha$ other nodes which store common symbols with node $i$.
Constructions of FR codes based on different types of regular graphs and combinatorial designs can be found in~\cite{ElrRam,resolvFR,dress,FR_TD}.

\emph{Batch codes}~\cite{batchAp} are a family of codes for DSS which store $\theta$ (encoded) data symbols on $n$ system nodes in such a way that any batch of $t$ data symbols can be decoded by reading at most one symbol from each node, while keeping the total storage over all $n$ nodes equal to $N$.
A $\rho$-\emph{uniform} \emph{combinatorial batch code} (CBC), denoted by $\rho-(\theta,N=\rho \theta,t,n)$,  is a batch code where
each node stores a subset of data symbols, that is decoding is performed only by reading items from the nodes, and each symbol is stored in exactly $\rho$ nodes~\cite{batchAp,Paterson}.  These codes were studied in~\cite{batchBounds,batchTuBu,batchAp,Paterson,batchTD}.
%in~\cite{PSW08,BRR11,BaBh12,BuTu13,SiGa13}.

In this work, we consider two new families of codes for DSS.
The first family, called \emph{fractional repetition batch} (FRB) codes, is based on the combination of FR and combinatorial batch codes and hence has the properties of both FR and batch codes simultaneously: FRB codes allow for uncoded efficient repairs and load balancing in partial data reconstruction which can be performed by several users independently and in parallel.
%We provide several examples of such codes and analyze their properties.
The second family of codes, called \emph{erasure combinatorial batch codes} (ECBCs), allow for recovery of any batch of $t$ data symbols even in presence of nodes failures, by reading at most one symbol from the remaining available nodes. ECBCs generalize the original batch codes~\cite{batchAp,Paterson} which require \textit{all} the nodes in a system to be always available for accessing their stored data.
We analyze the properties of incidence matrices of FRB codes and ECBCs and present the necessary and sufficient conditions on the structure of these codes. We provide constructions for FRB codes and ECBCs based on transversal designs and affine planes.

%we consider a new family of codes for DSS, which we call \emph{fractional repetition batch} (FRB) codes,
%that have both the properties of FR and batch codes simultaneously. FRB codes allow for uncoded efficient repairs and load balancing in partial data reconstruction which can be performed by several users independently and in parallel. We provide several examples of such codes and analyze their properties.
%In addition, motivated by the application of erasure codes in DSS, we propose new batch codes which can tolerate node erasures and present bounds and constructions for this type of batch codes.

The rest of this paper is organized as follows. In Section~\ref{sec:FR Batch}
we define FRB codes, consider  properties of their incidence matrices and provide some examples of their constructions. In Section~\ref{sec:erasure batch} we define ECBCs, discuss their properties and describe codes based on affine planes and transversal designs. Conclusions and problems
for future research are given in Section~\ref{sec:conclusion}.
%=====================================================================================================================================
\vspace{-.2cm}
\section{Fractional Repetition Batch Codes}
\label{sec:FR Batch}

In this section we consider a new family of codes for DSS, called FRB codes, which combine the properties of both FR and combinatorial batch codes.

Let $\bf f$ $\in \F_q^M$ be a file of size $M$ and let $c_{\bf f}\in \F_q^{\theta}$ be a codeword of an $(\theta,M)$ MDS code which encodes the data $\bf f$.
Let $\{N_1,\ldots,N_n\}$ be a collection of $\alpha$-subsets of a set $[\theta]$.
 A $\rho-(n,M,k,\alpha,t)$ \emph{fractional repetition batch} (FRB) code $C$ represents a system of $n$ nodes with the following properties:
\begin{enumerate}
  \item Every node $i$, $1\leq i\leq n$, stores $\alpha$ symbols of $c_{\bf f}$ indexed by $N_i$;
  \item Every symbol of $c_{\bf f}$ is stored on $\rho$ nodes;
  \item From any set of $k$ nodes it is possible to reconstruct the stored file $\bf f$, in other words, $M=\min_{|I|=k}|\cup_{i\in I}N_i|$;
  \item Any batch of $t$ symbols from $c_{\bf f}$ can be retrieved by downloading at most one symbol from each node.
\end{enumerate}

Note that the total storage over all $n$ nodes needed to store a file $\bf f$ equals to $n\alpha=\theta\rho$. The general coding scheme for an FRB code is shown in Fig.~\ref{fig:FRscheme}.

\begin{figure*}[h]
\centering
%\begin{center}
\includegraphics[width=0.85\textwidth]{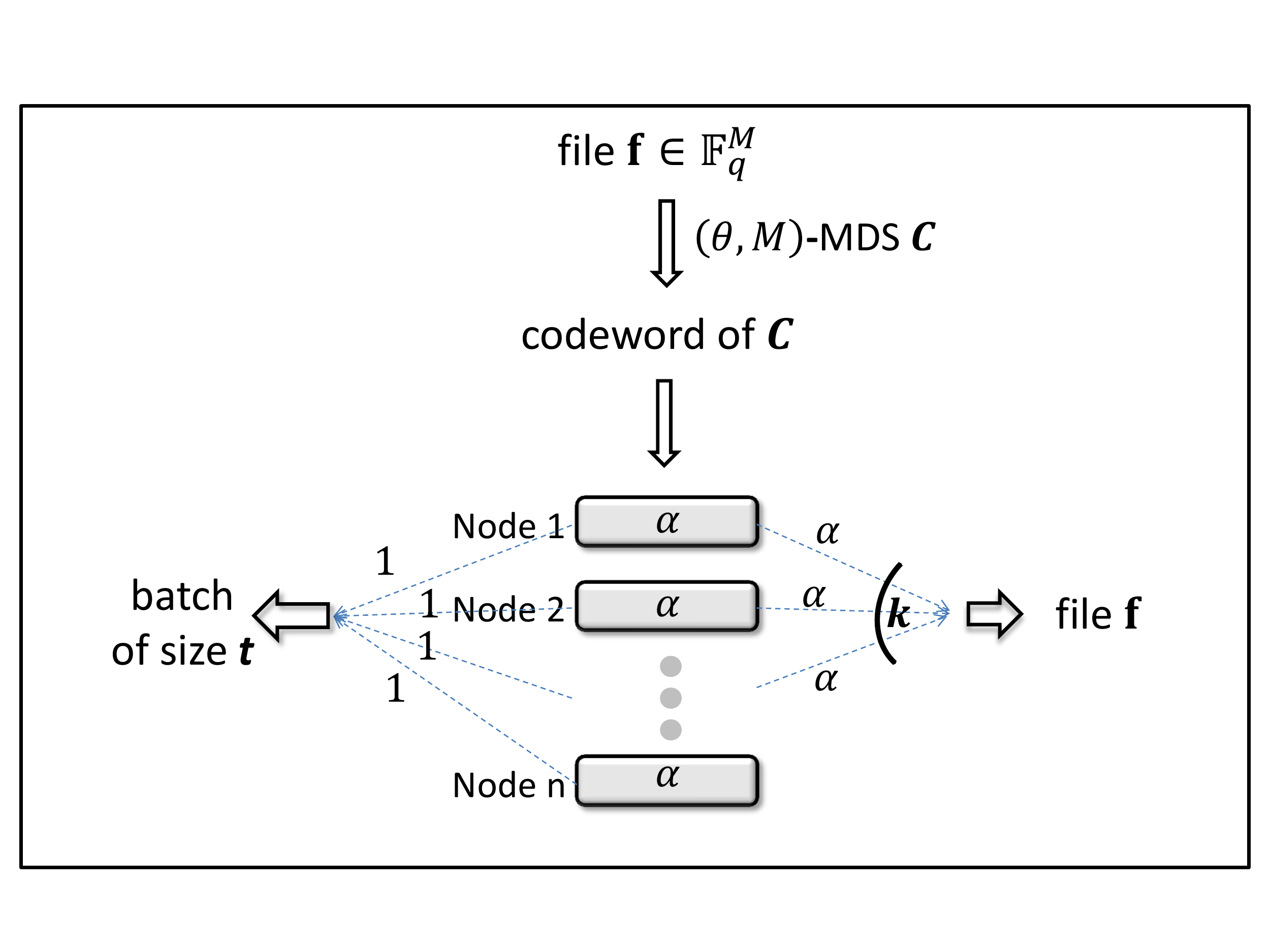}
\caption{The coding scheme for an FRB code}\label{fig:FRscheme}
%\end{center}
\end{figure*}

\begin{remark}
Note that while in a classical batch code any $t$ \emph{data} symbols can be retrieved, in a FRB code any batch of $t$ \emph{coded} symbols can be retrieved.  In particular, when a systematic MDS code is chosen for an FRB code, the data symbols can be easily retrieved.
\end{remark}

Now we consider the matrix representation of FRB codes which follows from the matrix representation of FR and combinatorial batch codes.
The incidence matrix of a $\rho-(n,M,k,\alpha,t)$ FRB code $C$, denoted by $\textbf{I}(C)$, is  a binary $n\times \theta$ matrix with rows and columns indexed by the nodes and symbols of an MDS codeword, respectively, such that
$(\textbf{I}(C))_{i,j}=1$ if and only if node $i$ contains symbol $j$ of $c_{\bf f}$. In other words, the
$i$th row of $\textbf{I}(C)$ is the incidence vector of the set $N_i$. Note that the number of ones in each row is $\alpha$ and the number of ones in each column is $\rho$ in this matrix.

In the following, we obtain the necessary and sufficient conditions on a binary matrix to be the incidence matrix of an FRB code.
Let $A$ be a binary matrix, and let $S$ and $T$ be some subsets of rows and columns of $A$, respectively. Let $A_{S,T}$ be a submatrix of $A$ with rows and columns indexed by $S$ and $T$. We say that a set $T$ of columns \emph{covers} a set $S$ of rows if there is no all-zero row in $A_{S,T}$. Similarly, a set $S$ of rows \emph{covers} a set $T$ of columns if there is no all-zero column in $A_{S,T}$.

The next theorem follows from the properties of incidence matrices for combinatorial batch and FR codes (see~\cite{batchBounds,Paterson,FR_TD} for details).
\begin{theorem}
\label{thm:incidence matrix}
An $n\times \theta$ binary matrix $A$ with $\alpha$ \emph{ones} in each row and $\rho$ \emph{ones} in each column is the incidence matrix of a $\rho-(n,M,k,\alpha,t)$ FRB code  if and only if the following two conditions hold:
\begin{enumerate}
  \item Any $i$ columns of $A$, $1\leq i\leq t$, cover at least $i$ rows;
  \item Any $k$ rows of $A$ cover at least $M$ columns.
\end{enumerate}
\end{theorem}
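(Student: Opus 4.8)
The plan is to verify the two conditions independently, since condition 2 governs property 3 (file reconstruction, the FR part) and condition 1 governs property 4 (batch retrieval, the CBC part), while properties 1 and 2 are built into the hypotheses that each row has weight $\alpha$ and each column has weight $\rho$. Throughout I would use the dictionary between the matrix $A$ and the set system $\{N_1,\dots,N_n\}$: row $i$ is the incidence vector of $N_i$, a column is indexed by a symbol of $c_{\bf f}$, and an entry equals $1$ exactly when that symbol lies in that node's set. Unwinding the two notions of \emph{cover} in this language, a set $S$ of rows covers a set $T$ of columns precisely when every symbol in $T$ is stored in some node of $S$, so the columns covered by $S$ are exactly $\bigcup_{i\in S}N_i$; dually, the rows covered by a set $T$ of columns are exactly the nodes $\{i : N_i\cap T\neq\emptyset\}$.

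For the reconstruction direction I would fix any set $S$ of $k$ rows. By the dictionary, the number of columns it covers equals $|\bigcup_{i\in S}N_i|$, the number of distinct codeword symbols available from those $k$ nodes. Since $c_{\bf f}$ is a codeword of a $(\theta,M)$ MDS code, any $M$ of its coordinates determine $\bf f$; hence $\bf f$ is recoverable from the nodes in $S$ if and only if $|\bigcup_{i\in S}N_i|\geq M$. Requiring this for every $k$-set $S$ is exactly property 3 (with $M=\min_{|I|=k}|\bigcup_{i\in I}N_i|$), and it is literally the statement that any $k$ rows cover at least $M$ columns, i.e. condition 2. This yields the equivalence of condition 2 with reconstruction in both directions at once.

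For the batch direction I would phrase retrieval as a matching problem. To serve a batch $T$ of (at most $t$) distinct symbols while reading at most one symbol per node, we must choose, for each requested symbol $j\in T$, a distinct node containing $j$ — that is, a system of distinct representatives for the sets $\{i : j\in N_i\}$, $j\in T$. Equivalently, in the bipartite graph whose left vertices are the symbols of $T$, whose right vertices are the nodes, and whose edges are the incidences of $A$, we need a matching saturating $T$. By Hall's marriage theorem such a matching exists iff every subset $T'\subseteq T$ has a neighbourhood of size at least $|T'|$, and by the dictionary the neighbourhood of $T'$ is exactly the set of rows covered by the columns $T'$. Thus property 4 holds iff for every column set $T'$ with $|T'|=i$ and $1\leq i\leq t$ the columns $T'$ cover at least $i$ rows, which is condition 1: one direction is Hall applied to each batch and its subsets, while the other observes that any violating $i$-set is itself an unservable batch of size $i\le t$.

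Combining the two equivalences, $A$ satisfies properties 1--4 iff conditions 1 and 2 both hold. The step I expect to be the main obstacle is pinning down the exact retrieval model in the batch direction: the clean Hall characterisation above is for a batch of \emph{distinct} coded symbols, so I would want to state that convention explicitly (the $\rho$-fold replication being what lets several users read disjoint copies in parallel) and check that ``at most one symbol per node'' is faithfully modelled by a system of distinct representatives rather than a more permissive assignment. The reconstruction side is routine once the MDS fact that any $M$ coordinates suffice is invoked.
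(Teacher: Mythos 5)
Your proof is correct and follows exactly the route the paper intends: the paper gives no explicit proof of this theorem, stating only that it ``follows from the properties of incidence matrices for combinatorial batch and FR codes'' with citations, and those underlying results are precisely the two equivalences you establish --- Hall's marriage theorem (system of distinct representatives) for the batch condition on columns, and the MDS property (any $M$ coordinates determine the file) for the reconstruction condition on rows. Your closing caveat about distinct symbols in a batch is the right convention for combinatorial batch codes, so nothing further is needed.
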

If we consider the incidence matrix of an FRB code as the biadjacency matrix of a bipartite graph, where the left vertex set $L$
corresponds to the rows (nodes) and the right vertex set $R$ corresponds to the columns (codeword symbols) of the matrix, then the conditions of Theorem~\ref{thm:incidence matrix} can be formulated as follows.
\begin{corollary}
\label{cor:bipartite graph}
A biadjacency matrix of a bipartite graph $G=(L\cup R, E)$, $|L|=n$, $|R|=\theta$, with the left degree $\alpha$ and right degree $\rho$, is the incidence matrix of a $\rho-(n,M,k,\alpha,t)$ FRB code  if and only if the following two conditions hold:
%requirements are satisfied:
\begin{enumerate}
  \item Any subset $T\subseteq R$ of at most $t$ vertices has at least $|T|$ neighbours in $L$;
  \item Any subset $S\subseteq L$ of $k$ vertices has at least $M$ neighbours in $R$.
\end{enumerate}
\end{corollary}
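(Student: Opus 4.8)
The plan is to show that the two conditions of Corollary~\ref{cor:bipartite graph} are verbatim translations of the two conditions of Theorem~\ref{thm:incidence matrix}, so that the corollary follows at once by applying that theorem to the biadjacency matrix of $G$.

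First I would fix the dictionary between the matrix and the graph. The biadjacency matrix $A$ of $G$ has its rows indexed by $L$ and its columns indexed by $R$, with $A_{i,j}=1$ precisely when $\{i,j\}\in E$; thus rows correspond to nodes and columns to codeword symbols, exactly as in $\textbf{I}(C)$. The left degree $\alpha$ says that every row of $A$ has $\alpha$ ones, and the right degree $\rho$ says that every column has $\rho$ ones, so $A$ satisfies the hypotheses of Theorem~\ref{thm:incidence matrix}.

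Next I would translate the covering relation into neighbourhoods. For a set $T\subseteq R$ of columns, a row $i$ has a nonzero entry somewhere in $T$ if and only if $i$ is adjacent to some vertex of $T$; hence the largest set of rows covered by $T$ is exactly the neighbourhood $N(T)\subseteq L$, and the phrase ``$T$ covers at least $|T|$ rows'' is the statement $|N(T)|\ge |T|$. Applying this for every $T$ with $1\le |T|\le t$ turns the first condition of Theorem~\ref{thm:incidence matrix} into the first condition of the corollary. Symmetrically, for a set $S\subseteq L$ of rows the columns covered by $S$ are precisely those adjacent to some vertex of $S$, i.e.\ $N(S)\subseteq R$, so ``the $k$ rows $S$ cover at least $M$ columns'' becomes $|N(S)|\ge M$; this converts the second condition of the theorem into the second condition of the corollary.

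With both conditions identified, the equivalence asserted by Corollary~\ref{cor:bipartite graph} is exactly the equivalence already established in Theorem~\ref{thm:incidence matrix}. The only point demanding care is the bookkeeping in the covering relation: one must verify that the roles of rows and columns are matched in the right direction — a column set covers the rows it reaches, while a row set covers the columns it reaches — and that ``no all-zero row (resp.\ column) in the submatrix'' corresponds precisely to ``every vertex of the covered part is adjacent to the covering set,'' whose maximal covered set is exactly the neighbourhood of the covering set. Beyond this translation there is no additional mathematical content.
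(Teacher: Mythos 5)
Your proposal is correct and matches the paper's intent exactly: the paper gives no separate proof of Corollary~\ref{cor:bipartite graph}, treating it as an immediate restatement of Theorem~\ref{thm:incidence matrix} under the row/column--to--$L$/$R$ dictionary you spell out. Your careful identification of the maximal covered set with the neighbourhood of the covering set is precisely the (implicit) content of the paper's reformulation.
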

\begin{remark} The construction of batch codes based on unbalanced expander graphs was proposed in~\cite{batchAp}. To construct an FRB code, we need a bipartite expander with two different expansion factors, 1 and $M/k$, for two sides $R$ and $L$ of a graph, respectively.
\end{remark}
%=====================================================================================================================================
\vspace{-0.5cm}
\subsection{Constructions of FRB codes}
In this subsection, we consider constructions of FRB codes based on optimal FR codes and optimal uniform CBCs.
We say that an FR code is an optimal code if it can store a file of maximum size, i.e. it maximizes $M=M(n,k,\alpha,\rho)$ (see~\cite{ElrRam,FR_TD}  for details).
%For example, FR codes based on Tur\'an graphs, graphs with large girth and on transversal designs (see []), are optimal FR codes.
We say that a uniform  combinatorial batch code is an optimal code if it stores the maximum number of symbols, i.e., it maximizes $\theta=\theta(n,\rho,t)$ (see~\cite{batchBounds,Paterson,batchTD}).

It was proved recently~\cite{batchTD} that combinatorial batch codes based on some transversal designs are (near) optimal CBCs. Moreover, it was shown that FR codes based on transversal designs are optimal FR codes~\cite{FR_TD}. Therefore, it is natural to consider FRB codes based on transversal designs.
%First, we present a definition of a transversal design.

A \emph{transversal design} (TD) of group size $h$ and block size $\ell$,  denoted by $\text{TD}(\ell, h)$,
%$\text{TD}_{\lambda}(\ell, h)$
is a triple $(\mathcal{P},\mathcal{G},\mathcal{B})$, where
\begin{enumerate}
\item $\mathcal{P}$ is a set of $\ell h$ \emph{points};
\item $\mathcal{G}$ is a partition of $\mathcal{P}$ into $\ell$ sets
(\emph{groups}), each one of size $h$;
\item $\mathcal{B}$ is a collection of $\ell$-subsets of $\mathcal{P}$
(\emph{blocks});
\item each block meets each group in exactly one point;
\item any pair of points from different groups is contained in exactly one block.
\end{enumerate}
It follows from the definition of TD that the number of blocks in $\text{TD}(\ell, h)$ is $h^2$ and the number of blocks that contain a given point is $h$~\cite{Anderson}. The incidence matrix $\textbf{I}_{TD}$ of $\text{TD}(\ell, h)$ is  the $\ell h\times h^2$ binary matrix where columns are incidence vectors of the blocks.
A $\text{TD}(\ell,h)$ is called \emph{resolvable} if the set $\mathcal{B}$
can be partitioned into sets $\mathcal{B}_1,...,\mathcal{B}_h$, each one contains $h$ blocks,
such that each element of $\mathcal{P}$ is contained in exactly one block of
each $\mathcal{B}_i$.
%The sets $\mathcal{B}_1,...,\mathcal{B}_h$ are called \emph{parallel classes}.
Resolvable $\text{TD}(\ell,h)$ is known to exist for any $\ell \leq h$ and prime power $h$~\cite{Anderson}.
%The incidence matrix of a resolvable TD can be written in a blocks form, where every block is a $h\times h$ permutation matrix which corresponds to $h$ points of a group and $h$ blocks of a parallel class.

Next we consider an FRB code $C_{\textmd{TD}}$ such that its incidence matrix is the incidence matrix of TD. Based on the properties of uniform CBCs and FR codes constructed from different TDs~\cite{FR_TD,batchTD}, we obtain the following result.
\begin{theorem}
\label{thm:FRB from TD}
$~$
\begin{enumerate}
  \item Let $\textmd{TD}(2,\alpha)$ be a TD with $\alpha>2$. Then $C_{\textmd{TD}}$ is a $2-(2\alpha, M,k,\alpha,5)$ FRB code with
  $M=k\alpha-\left\lfloor\frac{k^2}{4}\right\rfloor$.
  \item Let $\textmd{TD}(\alpha-1,\alpha)$ be a resolvable TD,  for a prime power $\alpha$. Then $C_{\textmd{TD}}$ is a ${(\alpha-1)-(\alpha^2-\alpha, M,k,\alpha,\alpha^2-\alpha-1)}$ FRB code with
  $M\geq k\alpha -\binom{k}{2}+(\alpha-1)\binom{x}{2}+xy$, where $x,y\geq 0$ are integers which satisfy $k=x(\alpha-1)+y$, $y\leq \alpha-2$.
\end{enumerate}
\end{theorem}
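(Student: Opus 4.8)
The plan is to verify the two conditions of Theorem~\ref{thm:incidence matrix} for the incidence matrix $\textbf{I}_{TD}$ of each transversal design, since by construction $C_{\textmd{TD}}$ stores symbols according to the blocks (columns) of the TD. Here the rows of $\textbf{I}_{TD}$ correspond to the $\ell h$ points (our $n$ nodes), and the columns correspond to the $h^2$ blocks (our $\theta=h^2$ coded symbols); thus the block size $\ell$ equals the column weight, which is the repetition factor $\rho$, and each point lies in $h$ blocks, so the row weight is $h$, giving $\alpha=h$. This immediately fixes the parameters: in part (1), $\ell=2$ and $h=\alpha$ yield $n=2\alpha$, $\rho=2$, $\theta=\alpha^2$; in part (2), $\ell=\alpha-1$ and $h=\alpha$ yield $n=\alpha^2-\alpha$, $\rho=\alpha-1$, $\theta=\alpha^2-\alpha$. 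Since the theorem cites \cite{FR_TD,batchTD} for the component results, the main task is to show that the claimed $t$ and $M$ are exactly the values guaranteed by the batch property (condition~1) and the FR reconstruction property (condition~2), respectively.

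For condition~1 (the batch/Hall-type condition), I would appeal directly to the established fact that these particular TD-based incidence matrices are (near-)optimal uniform CBCs \cite{batchTD}. The value of $t$ is the largest integer such that any $i\le t$ columns cover at least $i$ rows; equivalently, by Hall's theorem, the largest $t$ for which every set of up to $t$ blocks admits a system of distinct representatives among the points. For the $\text{TD}(2,\alpha)$ case the claim is $t=5$, and for the resolvable $\text{TD}(\alpha-1,\alpha)$ case it is $t=\alpha^2-\alpha-1=\theta-1$. I would quote the CBC bounds from \cite{batchTD} to certify that $t$ cannot be larger (by exhibiting a set of $t+1$ blocks whose union of points is too small to be covered) and that it is achievable (Hall's condition holds for all smaller sets). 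The structural reason in the resolvable case is that the $h=\alpha$ parallel classes force a highly regular point–block incidence, so only an almost-total collection of blocks fails the SDR test.

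For condition~2 (the FR reconstruction condition), I would invoke the optimal FR code analysis of \cite{FR_TD}: the file size is $M=\min_{|I|=k}|\cup_{i\in I}N_i|$, where now the roles are read off the same matrix with rows as nodes. Concretely, each node (point) stores the symbols (blocks) through it, and the union of $k$ nodes' stored symbols corresponds to the set of blocks meeting at least one of $k$ chosen points. Counting via inclusion–exclusion, $|\cup_{i\in I}N_i| = k\alpha - (\text{number of incidences double-counted})$, and the double-counting is governed by how many pairs of the chosen $k$ points are covered by a common block. In the $\text{TD}(2,\alpha)$ case each pair of points from different groups lies in exactly one block, which leads after minimizing to the clean expression $M=k\alpha-\lfloor k^2/4\rfloor$ (the floor arising from balancing the split of the $k$ points between the two groups). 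In the resolvable case the resolvability into $\alpha-1$ parallel classes controls the overlaps and yields the stated lower bound $M\ge k\alpha-\binom{k}{2}+(\alpha-1)\binom{x}{2}+xy$ with $k=x(\alpha-1)+y$.

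The step I expect to be the main obstacle is pinning down the extremal configuration that achieves the minimum in condition~2 for the resolvable case, i.e.\ justifying the correction terms $(\alpha-1)\binom{x}{2}+xy$ and the decomposition $k=x(\alpha-1)+y$ with $y\le\alpha-2$. This requires identifying which $k$ points minimize the number of distinct blocks through them: one wants to pack the $k$ points so as to maximize shared blocks, and the resolvable structure suggests choosing points that repeatedly coincide within the $x$ full parallel-class-sized groups plus a residual $y$. I would handle this by writing the union size as $k\alpha$ minus a sum over block multiplicities and then arguing, using the fact that any two points lie in at most one common block and that the design is resolvable, that the minimizing choice is the balanced packing described; the inequality (rather than equality) in the statement reflects that this gives only a lower bound on $M$, so it suffices to exhibit that no choice of $k$ points can drop the union below the stated quantity.
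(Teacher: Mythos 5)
Your overall strategy is exactly the paper's: the paper gives no explicit proof of this theorem, stating only that it follows from the properties of CBCs and FR codes based on transversal designs in \cite{batchTD} and \cite{FR_TD}, and your plan --- read $n,\rho,\alpha,\theta$ off the design, certify condition~1 of Theorem~\ref{thm:incidence matrix} via the CBC results and condition~2 via the FR results --- is that same route. Your inclusion--exclusion sketch for $M$ is also the right mechanism: for $\textmd{TD}(2,\alpha)$ the cross-group pairs give $\lfloor k/2\rfloor\lceil k/2\rceil=\lfloor k^2/4\rfloor$, and in general the deduction from $k\alpha$ is at most $\binom{k}{2}$ minus the number of same-group pairs, which is minimized by spreading the $k$ points evenly over the groups.

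Two concrete slips, though. First, for $\textmd{TD}(\alpha-1,\alpha)$ the number of blocks is $h^2=\alpha^2$, so $\theta=\alpha^2$, not $\alpha^2-\alpha$ (check: $n\alpha/\rho=(\alpha^2-\alpha)\alpha/(\alpha-1)=\alpha^2$). Consequently $t=\alpha^2-\alpha-1$ is $n-1$, not $\theta-1$, and your picture of the batch condition failing only for an ``almost-total collection of blocks'' is wrong --- there are $\alpha^2$ blocks and the relevant fact from \cite{batchTD} is that every set of at most $n-1$ of them admits an SDR among the $n$ points. Second, in the resolvable case you attribute the decomposition $k=x(\alpha-1)+y$ to packing points within parallel classes; in fact $\alpha-1$ is the number of \emph{groups}, the extremal configuration distributes the $k$ points as evenly as possible among the groups (giving $(\alpha-1)\binom{x}{2}+xy$ same-group pairs, hence the stated bound), and resolvability is not what drives the $M$ bound at all --- it is what \cite{batchTD} uses to get the batch parameter $t=n-1$. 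Neither slip invalidates the approach, but as written the verification of condition~1 in part~(2) and the identification of the extremal set in condition~2 would not go through.
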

\begin{example} We consider the FRB code based on $\textmd{TD}(3,4)$. By Theorem~\ref{thm:FRB from TD}, for $k=4$ we have a $3-(12,11,4,4,11)$ FRB code, which stores a file of size $11$ and which allows for parallel reads of any (coded) $11$ symbols.
%For $k=3$, we have a $3-(12,9,3,4,11)$ FRB code, which stores a file of size $9$ and which allows of parallel reads of any (coded) $11$ symbols.
\end{example}
In general, when a given FR code is considered as a batch code, determining its parameter $t$ (the number of symbols that can be read in parallel) is a nontrivial task. Similarly,  for a given batch code it is difficult to find the parameter $M$  (the file size) for any $k$. In the following, we consider a FRB code based on $\textmd{TD}(3,\alpha)$, where every symbol is replicated 3 times. For this code, the parameter $M$ is given in~\cite{FR_TD}. We obtain the upper and lower bounds on $t$ in the following theorem.
\begin{theorem}
\label{thm:TD(3)}
The FRB code based on $\textmd{TD}(3,\alpha)$ is a $3-(3\alpha, M, k, \alpha, t)$ code, where $6\leq t\leq 2\alpha+1$ for $\alpha\geq 7$ and $t=12$ for $\alpha=5$. The file size is given by $M=k\alpha-\binom{k}{2}+3\binom{x}{2}+xy$, for $x,y\geq 0$ such that $k=3x+y$ and $y\leq 2$.
\end{theorem}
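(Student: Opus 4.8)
The plan is to read the parameters off the transversal-design structure and then treat the two conditions of Theorem~\ref{thm:incidence matrix} (equivalently Corollary~\ref{cor:bipartite graph}) separately. I identify the rows of $\mathbf{I}(C)$ with the $3\alpha$ points of $\mathrm{TD}(3,\alpha)$ and the columns with its $\alpha^2$ blocks; since every point lies in $\alpha$ blocks and every block has $3$ points, each row has $\alpha$ ones and each column has $3$ ones, giving $n=3\alpha$, $\rho=3$ and node size $\alpha$. The reconstruction value $M$ is exactly the minimum, over $k$-subsets of points, of the number of blocks meeting them, i.e.\ condition~2 of Theorem~\ref{thm:incidence matrix}; this is the file size of the underlying FR code, so I simply quote $M=k\alpha-\binom{k}{2}+3\binom{x}{2}+xy$ (with $k=3x+y$, $0\le y\le 2$) from~\cite{FR_TD}. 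It therefore remains to analyse condition~1 of Corollary~\ref{cor:bipartite graph}: $t$ is one less than the size of the smallest set $S$ of blocks whose point-neighbourhood $N(S)=\bigcup_{b\in S}b$ satisfies $|N(S)|<|S|$ (a Hall violator). The one structural fact I use throughout is that two distinct blocks meet in at most one point, so if a block-set $S$ occupies $a,b,c$ points in the three groups $G_0,G_1,G_2$ then, projecting onto the three pairs of groups, $|S|\le\min(ab,bc,ca)$.

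For the lower bound $t\ge 6$ I show that every set of at most $6$ blocks covers at least as many points. Each block meets every group, so $a,b,c\ge 1$, and by the projection inequality a violator with $|S|=j$ forces $\min(ab,bc,ca)\ge j$. Hence $|N(S)|=a+b+c$ is at least $g(j):=\min\{a+b+c:\ a,b,c\ge 1,\ \min(ab,bc,ca)\ge j\}$, and a short finite check over integer triples (for instance $g(6)=8$, attained at $(a,b,c)=(2,3,3)$) gives $g(j)\ge j$ for all $1\le j\le 6$. Consequently no block-set of size $\le 6$ is a Hall violator and condition~1 holds up to $t=6$; the same estimate in fact yields $g(j)\ge j$ for $j\le 11$, so the bound extends, but $t\ge 6$ is all that is needed.

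For the upper bound $t\le 2\alpha+1$ (with $\alpha\ge 7$) I exhibit an explicit violator of size $2\alpha+2$ on only $2\alpha+1$ points, realising $\mathrm{TD}(3,\alpha)$ from the Latin square $L(i,j)=i+j\pmod{\alpha}$, so a block is $\{g_{0,i},g_{1,j},g_{2,i+j}\}$. For $\alpha\ge 8$ I take the three rows $i\in\{0,1,2\}$ and the $\alpha-2$ columns $j\in\{0,\dots,\alpha-3\}$; all $3(\alpha-2)$ cells are blocks, using $3$ points of $G_0$, $\alpha-2$ points of $G_1$ and, since $\{i+j\}$ sweeps all of $\mathbb{Z}_\alpha$, all $\alpha$ points of $G_2$, i.e.\ exactly $2\alpha+1$ points. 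As $3(\alpha-2)\ge 2\alpha+2$ for $\alpha\ge 8$, any $2\alpha+2$ of these blocks form a set $S$ with $|N(S)|\le 2\alpha+1<|S|$. For the boundary case $\alpha=7$ the same idea uses the $4\times 4$ subarray $i,j\in\{0,1,2,3\}$, whose $16=2\alpha+2$ blocks cover $4+4+7=15=2\alpha+1$ points. In either case condition~1 fails at index $2\alpha+2$, so $t\le 2\alpha+1$.

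Finally, for $\alpha=5$ the generic construction does not fit below size $13$, and the claim is the sharp value $t=12$, so both inequalities are handled by hand. For $t\le 12$ I take the $4\times 4$ subarray of the cyclic $\mathrm{TD}(3,5)$ and delete the cells carrying one chosen symbol; this leaves $13$ blocks occupying $4$ points in each group, i.e.\ $12$ points, a violator of size $13$. The reverse inequality $t\ge 12$ is the delicate step: the projection bound only certifies $g(j)\ge j$ up to $j=11$, so ruling out a $12$-block violator needs a finer, structure-specific count (using that in $\mathbb{Z}_5$ the relevant sumsets wrap around, which prevents $12$ blocks from being packed onto $11$ points). I expect this sharpening of the lower bound for $\alpha=5$, together with pinning down an optimal violator in general, to be the main obstacle; the remaining pieces are routine once the inequality $|S|\le\min(ab,bc,ca)$ is in place.
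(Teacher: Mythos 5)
Your proposal is correct on every point the paper actually argues, but it gets there by a genuinely different route, and in one place it proves more than it claims. The paper's proof is a three-line sketch: it imports $n,\rho,M$ from \cite{FR_TD}, derives the lower bound $t\geq 6$ by monotonicity from the $\text{TD}(2,\alpha)$ case (Theorem~\ref{thm:FRB from TD}.1, i.e.\ adding a third group to a code with $t=5$), and for the upper bound merely asserts that ``one can show'' $2\alpha+2$ columns covering only $2\alpha+1$ rows exist. You replace the lower-bound reduction with a direct counting argument: two blocks of a TD share at most one point, so a set of $j$ blocks occupying $a,b,c$ points in the three groups satisfies $j\leq\min(ab,bc,ca)$, and minimizing $a+b+c$ under this constraint rules out Hall violators. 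This is cleaner and self-contained, and as you note in passing it actually yields $t\geq 11$ (the first triple with $a+b+c<\min(ab,bc,ca)$ is $(3,4,4)$ at $j=12$), strictly stronger than the stated $t\geq 6$. For the upper bound you make the paper's assertion concrete with the $3\times(\alpha-2)$ (resp.\ $4\times 4$ for $\alpha=7$) subarray of a Latin square; note that your restriction to the cyclic square $L(i,j)=i+j$ is unnecessary, since the count $3+(\alpha-2)+\alpha=2\alpha+1$ only uses that the third group has $\alpha$ points, so the violator exists in \emph{any} $\text{TD}(3,\alpha)$.

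The one genuine gap is the claim $t\geq 12$ for $\alpha=5$, which you flag yourself: your projection bound stops at $11$, and ruling out a $12$-block violator requires showing that no $(3,4,4)$-configuration in $\text{TD}(3,5)$ carries all $12$ blocks through the $3\times 4$ pair-grid into only $4$ symbols. This does need the finer structural count you describe. You should be aware, however, that the paper's own proof does not address the $\alpha=5$ case at all, so this is a gap in the theorem's published justification as much as in your write-up; your $13$-blocks-on-$12$-points construction for $t\leq 12$ is already more than the paper supplies.
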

\begin{proof}
The parameters $n, \rho$ and $M$ follow from the properties of $\textmd{TD}(3,\alpha)$ and FR codes based on TDs~\cite{FR_TD}. The lower bound on $t$ follows from Theorem~\ref{thm:FRB from TD}.1. To prove the upper bound on $t$ one can consider a specific structure of an incidence matrix for TD and show that there are $2\alpha+2$ columns that cover only $2\alpha+1$ rows.
\qed
\end{proof}
%\begin{example} An FRB code based on $\textmd{TD}(3,5)$ is a $3-(15,4,15,5,12)$ code.
%\end{example}

In the rest of this section we consider FRB codes obtained from affine planes. The optimality of uniform combinatorial batch codes based on affine planes was proved in~\cite{batchTD}.

 An \emph{affine plane} of order $s$, denoted by $A(s)$, is  a set system $(X,\mathcal{B})$, where $X$ is a set of $|X|=s^2$ \emph{points}, $\mathcal{B}$ is a a collection of $s$-subsets (\emph{blocks}) of $X$ of size $|\mathcal{B}|=s(s+1)$, such that each pair of points in $X$ occur together in exactly one block of~$\mathcal{B}$.
 An affine plane is called \emph{resolvable}, if the \emph{set} $\mathcal{B}$ can be partitioned into $s+1$ sets of size $s$, called parallel classes, such that every element of $X$ is contained in exactly one block of each class. It is well known~\cite{Anderson} that if $q$ is a prime power, then there exists a resolvable affine plane $A(q)$.
 %It is well known~\cite{Anderson} that if $q$ is a prime power, then there exists an affine plane of order $q$.
\begin{theorem}
\label{thm:affine}
Let  $A(q)$ be an affine plane and let $\textbf{I}(A)$ be its $q^2\times (q^2+q)$ incidence matrix. Then the FRB code $C_A$ with the incidence matrix equal to $\textbf{I}(A)$ is a $q-(q^2, k(q+1)-\binom{k}{2},k,q+1,q^2)$ FRB code.
\end{theorem}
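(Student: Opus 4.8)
The plan is to apply the characterisation in Theorem~\ref{thm:incidence matrix} directly to $\textbf{I}(A)$, whose rows are indexed by the $q^2$ points of $A(q)$ (the nodes) and whose columns are indexed by the $q(q+1)$ blocks (the symbols). Because every point of $A(q)$ lies on exactly $q+1$ blocks and every block contains exactly $q$ points, each row of $\textbf{I}(A)$ has $\alpha=q+1$ ones and each column has $\rho=q$ ones, and $\theta=n\alpha/\rho=q^2+q$; these already match the stated parameters. It then suffices to verify the two conditions of Theorem~\ref{thm:incidence matrix}: the reconstruction condition with $M=k(q+1)-\binom{k}{2}$, and the batch condition with $t=q^2$.

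For the reconstruction condition I would fix a set $S$ of $k$ points and count the columns it covers, which are exactly the blocks meeting $S$. Let $y_j$ be the number of blocks containing exactly $j$ points of $S$. Counting incidences and point-pairs through the two defining properties of $A(q)$ gives $\sum_j j\,y_j=k(q+1)$ (each point lies on $q+1$ blocks) and $\sum_j \binom{j}{2}y_j=\binom{k}{2}$ (each pair of points lies on a unique block). Hence the number of covered columns is $\sum_{j\ge 1}y_j=k(q+1)-\sum_{j\ge 1}(j-1)y_j$, and since $j-1\le\binom{j}{2}$ for every $j\ge 1$ we get $\sum_{j\ge 1}y_j\ge k(q+1)-\binom{k}{2}=M$. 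Equality holds exactly when no block meets $S$ in three points, i.e. when $S$ is an arc; as arcs of the relevant size exist in $A(q)$, the value $M$ is attained and equals the minimum, which is the file size of the affine-plane FR code of~\cite{FR_TD}.

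For the batch condition I must show that any set $T$ of $i\le q^2$ columns covers at least $i$ rows, i.e. that the union $U=\cup_{B\in T}B$ satisfies $|U|\ge i$. The approach I would take is to pass to the complement $P=X\setminus U$ with $u=|P|=q^2-|U|$: every block in $T$ is disjoint from $P$, so $i\le$ (number of blocks disjoint from $P$), and it is enough to bound that quantity by $|U|=q^2-u$. Writing $j_B=|B\cap P|$, the same double counting yields $\sum_B j_B=u(q+1)$ and $\sum_B j_B^2=2\binom{u}{2}+u(q+1)=u(u+q)$, so by Cauchy--Schwarz the number $w$ of blocks meeting $P$ satisfies $w\ge \big(u(q+1)\big)^2/\big(u(u+q)\big)=u(q+1)^2/(u+q)$. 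A one-line computation shows $u(q+1)^2/(u+q)\ge u+q$ is equivalent to $(u-1)(q^2-u)\ge 0$, which holds for all $1\le u\le q^2$. Therefore $w\ge u+q$, the number of blocks disjoint from $P$ is at most $(q^2+q)-(u+q)=q^2-u=|U|$, and $i=|T|\le|U|$ follows (the case $u=0$ being immediate). Since $i=q^2+1>n$ cannot cover $q^2+1$ rows, $t=q^2$ is both valid and maximal.

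I expect the batch condition to be the main obstacle, specifically in the regime where $i$ is close to $q^2$ (equivalently $u$ close to $0$). The most natural first attempt, bounding $\binom{|U|}{2}\ge i\binom{q}{2}$ from the fact that the $i$ blocks use pairwise disjoint point-pairs inside $U$, only yields $|U|\ge i$ up to about $i\le q^2-q$ and fails at the top of the range. Replacing this by the complement estimate above --- equivalently, the statement that every $u$-point set is met by at least $u+q$ blocks, which is tight at $u=1$ and $u=q^2$ --- is what closes the gap for all $i\le q^2$ and is the step needing the most care. As a sanity check, this batch condition is precisely the defining property of the optimal affine-plane combinatorial batch code of~\cite{batchTD}, giving an independent confirmation of $t=q^2$.
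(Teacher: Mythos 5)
Your proof is correct, but it is far more self-contained than the paper's, which disposes of the theorem in two sentences: the parameters $\rho,n,\alpha$ and $t$ are simply quoted from the batch-code analysis of $A(q)$ in~\cite{batchTD}, and the file size is obtained by observing that any two rows of $\textbf{I}(A)$ meet in exactly one column (any two points lie on a unique block), so inclusion--exclusion gives $|\cup_{i\in I}N_i|\geq k(q+1)-\binom{k}{2}$. Your treatment of the reconstruction condition is the same idea written as a double count over the block degrees $y_j$, with the useful extra observation that equality forces $S$ to be an arc. Where you genuinely diverge is the batch condition $t=q^2$: instead of citing~\cite{batchTD} you prove the covering (Hall-type) condition directly, passing to the complement $P$ of the union, using the incidence identities $\sum_B j_B=u(q+1)$ and $\sum_B j_B^2=u(u+q)$ together with Cauchy--Schwarz to show that every nonempty $u$-set of points meets at least $u+q$ blocks, and closing the whole range $1\leq u\leq q^2$ via the factorization $(u-1)(q^2-u)\geq 0$ --- exactly where the naive pair-counting bound fails, as you correctly diagnose. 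This buys the reader a complete argument and an independent confirmation of the optimality result of~\cite{batchTD}, at the cost of length. The one caveat, which you share with the paper, is the claim that $M$ is attained with equality: that requires a $k$-arc in $A(q)$, which exists only for $k$ up to about $q+1$; for larger $k$ the formula $k(q+1)-\binom{k}{2}$ is only a lower bound on the minimum union size, which still suffices for the code to store a file of that size but leaves exactness of $M$ unaddressed.
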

\begin{proof}
The parameters $\rho, n,\alpha$ and $t$ follow from the properties of the batch code based on $A(q)$~\cite{batchTD}.
Since any two points of $A(q)$ belong to exactly one block and hence any two rows of $\textbf{I}(A)$ intersect, it follows that the file size is $k(q+1)-\binom{k}{2}$.
\qed
\end{proof}
%?????FRB codes based on graphs with large girth?
%=====================================================================================================================================
\vspace{-0.4cm}
\section{Erasure Combinatorial Batch Codes}
\label{sec:erasure batch}
In this section we consider uniform combinatorial batch codes which can tolerate node failures (erasures). We call such batch codes \emph{erasure batch} codes.
Specifically, we define a $\rho-(\theta,N=\rho\theta,t,n,\Delta)$ \emph{uniform erasure combinatorial batch} code (ECBC)
to be a code which stores $\theta$ data symbols on $n$ nodes, such that each symbol is stored on $\rho$ nodes and for any given set of
$\Delta$ failed nodes, any batch of $t$ symbols can be retrieved by reading at most one symbol from each one of $n-\Delta$ available nodes, while keeping the total storage equal to $N$. Note that it should hold that $\Delta\leq \rho-1$.

\begin{remark}
Note that if any set of $\Delta$ nodes contains at most $t$  different symbols, then it is possible to correct any $\Delta$ erasures, i.e., to repair $\Delta$ failed nodes by reading at most one symbol from  every available node.
\end{remark}

Similarly to Theorem~\ref{thm:incidence matrix}, we provide the necessary and sufficient conditions on a binary matrix to be the incidence matrix of a uniform ECBC.

\begin{theorem}
\label{thm:erasure batch}
An $n\times \theta$ binary matrix $A$ with $\rho$ \emph{ones} in each column is the incidence matrix of a $\rho-(\theta, N, t,n,\Delta)$ uniform ECBC  if and only if
any $i$ columns of $A$, $1\leq i\leq t$, cover at least $i+\Delta$ rows.
\end{theorem}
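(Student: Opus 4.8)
The plan is to reduce the statement to the batch--retrieval condition of Theorem~\ref{thm:incidence matrix} applied to the submatrices that survive after the failed rows are deleted. Retrieving a batch of $t$ symbols by reading at most one symbol from each available node is exactly the problem of finding, in the bipartite graph whose biadjacency matrix is $A$, a matching that saturates the $t$ chosen columns; by Hall's theorem such a matching exists for \emph{every} choice of $t$ columns precisely when every set of $i$ columns, $1\le i\le t$, covers at least $i$ rows, which is condition~1 of Theorem~\ref{thm:incidence matrix}. For an ECBC the same must hold after an arbitrary set $D$ of $\Delta$ rows (the failed nodes) is removed: for every such $D$ the submatrix $A_{[n]\setminus D,[\theta]}$ must have the property that any $i$ of its columns, $1\le i\le t$, cover at least $i$ rows. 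Thus $A$ is the incidence matrix of an ECBC if and only if this reduced condition holds for all $\Delta$-subsets $D$, and the task is to show that this whole family of conditions is equivalent to the single condition that any $i$ columns of $A$, $1\le i\le t$, cover at least $i+\Delta$ rows.

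For sufficiency I would argue directly. Suppose every set of $i\le t$ columns of $A$ covers at least $i+\Delta$ rows. Deleting any $\Delta$ rows destroys at most $\Delta$ of the rows covered by a given column set, so in every submatrix $A_{[n]\setminus D,[\theta]}$ with $|D|=\Delta$ each set of $i\le t$ columns still covers at least $i$ rows; the case of fewer than $\Delta$ failures is only easier. Hence the reduced batch condition holds for every $D$, and any batch of $t$ symbols can be retrieved from the surviving nodes.

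For necessity I would use the contrapositive. Suppose some set $T$ of $i$ columns, $1\le i\le t$, covers only $m\le i+\Delta-1$ rows. Because each column has $\rho$ ones and $\Delta\le\rho-1$, the set of rows covered by $T$ has size $m\ge\rho\ge\Delta+1$, so I can select a set $D$ of exactly $\Delta$ of these rows. After deleting $D$, the columns of $T$ cover at most $m-\Delta\le i-1<i$ rows of $A_{[n]\setminus D,[\theta]}$. Extending $T$ to an arbitrary batch of $t$ distinct symbols (possible since $\theta\ge t$) and applying Hall's theorem to the subset $T$, no matching saturating these $t$ columns exists in the reduced graph, so this batch cannot be retrieved once the nodes of $D$ have failed. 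This contradicts the ECBC property and proves necessity.

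The delicate points, which I would treat with care, are the two places where the hypotheses enter. The bound $\Delta\le\rho-1$ recorded in the definition is exactly what guarantees that the $\Delta$ failed rows can be chosen from \emph{within} the rows covered by $T$, which is what makes the deficiency appear; and the passage from a deficient set of $i<t$ columns to a full batch of $t$ symbols relies on the Hall-type reading of the covering condition, so that deficiency on the subset $T$ already blocks retrieval of the larger batch. I expect the main conceptual step to be verifying that the covering notion used here coincides with the neighbourhood in the bipartite model, so that the reduced condition is genuinely condition~1 of Theorem~\ref{thm:incidence matrix} on the deleted submatrix; once this identification is in place, the two inequalities above close the argument.
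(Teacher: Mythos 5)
Your proof is correct, and it follows exactly the argument the paper intends: the paper states Theorem~\ref{thm:erasure batch} without an explicit proof, deferring to the Hall-type characterization behind Theorem~\ref{thm:incidence matrix}, and your argument---sufficiency by noting that deleting $\Delta$ rows removes at most $\Delta$ covered rows, necessity by placing the $\Delta$ failures inside the covered row set of a deficient column set (using $\Delta\leq\rho-1$ to guarantee enough covered rows)---is precisely the natural completion of that sketch. No gaps; your two ``delicate points'' are indeed the only places where the hypotheses are used, and you handle both correctly.
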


Based on Theorem~\ref{thm:erasure batch} and resolvability of %an affine plane
$A(q)$~\cite{Anderson} we have the following result.
\begin{theorem}
\label{erasure batch affine}
Let  $A(q)$ be an affine plane and let $\textbf{I}(A)$ be its $q^2\times (q^2+q)$ incidence matrix. Then the code $C^{E}_A$ with the incidence matrix equal to $\textbf{I}(A)$ is a $q-(q^2+q,q^3+q^2,t,q^2,q-1)$ uniform ECBC, where $\frac{q^2-q+2}{2}\leq t\leq q^2-q$.
\end{theorem}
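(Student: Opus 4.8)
The plan is to verify the hypothesis of Theorem~\ref{thm:erasure batch} directly for the point–block incidence matrix of $A(q)$. Here the rows of $\textbf{I}(A)$ are indexed by the $q^2$ points of $X$ and the columns by the $q^2+q$ blocks of $\mathcal{B}$; since each block consists of $q$ points, every column has exactly $\rho=q$ ones, which already fixes $n=q^2$, $\theta=q^2+q$, $N=\rho\theta=q^3+q^2$ and $\Delta=\rho-1=q-1$. Under this identification a set $T$ of $i$ columns covers a row $P$ precisely when the point $P$ lies in at least one of the chosen blocks, so the number of rows covered by $T$ equals $\bigl|\bigcup_{B\in T}B\bigr|$. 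Writing $f(i)=\min_{|T|=i}\bigl|\bigcup_{B\in T}B\bigr|$, the condition of Theorem~\ref{thm:erasure batch} for batch size $t$ becomes $f(i)\ge i+q-1$ for every $1\le i\le t$, so the whole proof reduces to estimating $f(i)$.

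For the upper bound $t\le q^2-q$ I would exhibit a single bad family of blocks. Fix a point $P$; among the $q^2+q$ blocks exactly $q+1$ pass through $P$, leaving $q^2-1$ blocks contained in $X\setminus\{P\}$. Choosing any $i=q^2-q+1$ of them (possible since $q^2-q+1\le q^2-1$ for $q\ge 2$) gives a set of columns whose union of points avoids $P$ and hence covers at most $q^2-1$ rows. As $i+q-1=q^2$ in this case, the covering condition fails at $i=q^2-q+1$, so no valid batch size can reach that value, giving $t\le q^2-q$.

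The substantive part is the lower bound $t\ge \tfrac{q^2-q+2}{2}=\binom{q}{2}+1$, for which I would prove $f(i)\ge i+q-1$ whenever $1\le i\le \binom{q}{2}+1$. Fix $i$ blocks with union $U$, put $u=|U|$, and for $P\in U$ let $d_P\ge 1$ be the number of chosen blocks through $P$. Counting point–block incidences gives $\sum_{P\in U}d_P=iq$, while the fact that two blocks of an affine plane meet in at most one point gives $\sum_{P\in U}\binom{d_P}{2}\le\binom{i}{2}$. Applying Cauchy–Schwarz in the form $\sum_P d_P^2\ge (iq)^2/u$ and substituting into the second relation yields, after simplification, the clean estimate
\[
u\ \ge\ \frac{iq^2}{\,i+q-1\,}.
\]
It then suffices to check $\frac{iq^2}{i+q-1}\ge i+q-1$, i.e. $iq^2\ge (i+q-1)^2$, which rearranges to $g(i):=i\bigl(q^2-2q+2-i\bigr)\ge (q-1)^2$. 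Since $g$ is a downward parabola in $i$, one verifies $g(1)=(q-1)^2$ exactly and $g\bigl(\binom{q}{2}+1\bigr)\ge (q-1)^2$ (this last inequality reduces to $q^2(q-3)\ge 0$), so by concavity $g(i)\ge (q-1)^2$ throughout $1\le i\le \binom{q}{2}+1$, as required.

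I expect the main obstacle to be this lower bound, and specifically obtaining an estimate on $f(i)$ that is at once simple and tight enough at the relevant endpoint. The inclusion–exclusion geometry of many mutually intersecting blocks is awkward to control directly, so the right move is to avoid it and pass to the degree sequence $(d_P)$, where only the two global identities above are needed; tightness is then decided entirely by the endpoint behaviour of $g$, which is precisely why the threshold $\binom{q}{2}+1$ emerges (the bound is in fact sharp there, as $g\bigl(\binom{q}{2}+1\bigr)$ and the two-parallel-class configuration both attain equality). A minor point to settle separately is small $q$, since the endpoint estimate $g\bigl(\binom{q}{2}+1\bigr)\ge(q-1)^2$ uses $q\ge 3$.
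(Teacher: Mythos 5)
Your proposal is correct and reaches the same conclusion, but the route differs from the paper's in both halves. For the upper bound the paper erases $q-1$ points of a fixed block $b$ and builds an explicit family from the parallel classes (one block parallel to $b$, plus $q-1$ blocks avoiding the surviving point $p$ in each of the other $q$ classes) to get $q^2-q+1$ columns covering at most $q^2-1$ rows; your observation that \emph{any} $q^2-q+1$ of the $q^2-1$ blocks missing a fixed point already do the job is simpler and does not invoke resolvability. For the lower bound the paper argues in two regimes: for $i\leq q-1$ it uses the truncated inclusion--exclusion estimate $iq-\binom{i}{2}\geq i+q-1$, and for $q\leq i\leq\frac{q^2-q+2}{2}$ it uses monotonicity of the union together with the single value $q^2-\binom{q}{2}=\frac{q^2-q+2}{2}+(q-1)$ at $i=q$. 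You instead prove the uniform Corr\'adi-type bound $u\geq \frac{iq^2}{i+q-1}$ via the degree sequence and Cauchy--Schwarz and then check the resulting quadratic inequality at the two endpoints by concavity; this is a genuinely different (and arguably more illuminating) argument, since it exhibits the threshold $\binom{q}{2}+1$ as exactly where the quadratic $g$ returns to $(q-1)^2$. The trade-off is the edge case you already flag: for $q=2$ the clean inequality $iq^2\geq(i+q-1)^2$ fails at $i=2$ and you must fall back on the integrality of $u$ (here $u\geq\lceil 8/3\rceil=3=i+q-1$) or a direct check, whereas the paper's two-case argument covers $q\geq 2$ without a special case. Both proofs are valid; yours trades a small amount of case analysis at $q=2$ for a single unified estimate elsewhere.
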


\begin{proof} The parameters $\rho, \theta, N, n$ follow from the properties of $A(q)$, and $\Delta$ is the largest possible. To prove the upper bound on $t$ we consider a set of erased nodes which correspond to $q-1$ points of  a block $b$ of $A(q)$. Let $p\in b$ be the point which was not erased. If we take one block in the parallel class which contains $b$ and $q-1$ blocks which do not contain $p$ in each one of $q$ other parallel classes, then the corresponding $q^2-q+1$ columns of $\textbf{I}(A)$ cover at most $q^2-1$ rows, thus by Theorem~\ref{thm:erasure batch}, $t\leq q^2-q$. To prove the lower bound on $t$ we note that any $q$ columns of $\textbf{I}(A)$ cover at least $q^2-\binom{q}{2}$ rows (since there are $q$ blocks of $A(q)$ which pairwise intersect).
Then since $\frac{q^2-q+2}{2}\geq q$ for $q\geq 2$, any $i$ columns, where  $q\leq i\leq \frac{q^2-q+2}{2}$, cover at least $q^2-\binom{q}{2}=\frac{q^2-q+2}{2}+(q-1)\geq i+(q-1)$ rows. For $i\leq q-1$ it holds that any $i$ columns cover at least $iq-\binom{i}{2}\geq i+(q-1)$ rows, which completes the proof.
\qed
\end{proof}

Now we consider a uniform ECBC $C_{\textmd{TD}}^E$ based on a transversal design, i.e., the code with the incidence matrix equal to the incidence matrix of TD. Similarly to Theorems~\ref{thm:FRB from TD} and~\ref{thm:TD(3)} one can prove the following result.
\begin{theorem}
\label{erasure batch TD}
{~}
\begin{itemize}
  \item Let $\text{TD}(2,\alpha)$ be a TD with $\alpha>2$. Then the code $C_{\textmd{TD}}^E$ is a $2-(\alpha^2, 2\alpha^2,3,2\alpha,1)$ uniform ECBC.
  \item Let $\text{TD}(3,\alpha)$ be a TD with $\alpha>3$. Then the code $C_{\textmd{TD}}^E$ is a $3-(\alpha^2, 3\alpha^2,t,3\alpha,2)$ uniform ECBC, where $4\leq t\leq 2\alpha-2$ for $\alpha\geq 6$, $t=9$ for $\alpha=5$, and $t=8$ for $\alpha=4$.
\end{itemize}
\end{theorem}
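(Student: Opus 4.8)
The plan is to read off $\rho,\theta,N,n$ directly from the parameters of $\mathrm{TD}(\ell,\alpha)$ (block size $\rho=\ell$, number of blocks $\theta=\alpha^2$, number of points $n=\ell\alpha$, and $N=\rho\theta$), to set $\Delta=\ell-1$, which is forced to be the largest admissible value by $\Delta\le\rho-1$ and is consistent already at $i=1$ since each block covers exactly $\ell=\Delta+1$ points, and then to determine $t$ through Theorem~\ref{thm:erasure batch}. Identifying the columns of $\textbf{I}_{TD}$ with blocks and the rows with points, the task reduces to the combinatorial statement that any $i$ blocks, $1\le i\le t$, together contain at least $i+\Delta$ points. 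The single fact I will lean on throughout is that two distinct blocks meet in at most one point: two points from distinct groups lie in a unique block, and no block contains two points of one group.

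For the first item ($\ell=2$, $\Delta=1$) I would show $t=3$. Here the blocks are exactly the edges of $K_{\alpha,\alpha}$ on the two groups, so covering a set of points becomes spanning a set of vertices. For $t\ge3$ I check $i=1,2,3$: one edge spans $2$ vertices, two edges span at least $3$, and since a bipartite graph on $3$ vertices carries at most $\lfloor 9/4\rfloor=2$ edges, any $3$ edges must span at least $4=3+1$ vertices. For the matching upper bound I point to $K_{2,2}$: four edges spanning only four vertices violate the requirement to cover at least $i+1$ rows at $i=4$, so $t<4$. This $K_{2,2}$ exists already when $\alpha\ge2$, which is why the hypothesis $\alpha>2$ is enough.

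For the second item ($\ell=3$, $\Delta=2$) the lower bound $t\ge4$ is again a short count: adding blocks one at a time and using the pairwise-intersection fact, the $j$th block contributes at least $\ell-(j-1)$ new points, so any $i$ blocks cover at least $3i-\binom{i}{2}$ points, which for $i=1,2,3,4$ equals $3,5,6,6$, each at least $i+2$. The substance is the upper bound $t\le 2\alpha-2$ for $\alpha\ge6$, and the plan mirrors the construction behind the upper bound of Theorem~\ref{thm:TD(3)}. Realizing the design from a Latin square $L$ of order $\alpha$ (the block at cell $(i,j)$ carrying row $i$, column $j$ and symbol $L(i,j)$), I take all cells of a combinatorial rectangle $R\times C$ with $|R|=p$, $|C|=q$: this is a set of $pq$ blocks spanning only $p+q+s$ points, where $s$ is the number of distinct symbols occurring in the rectangle. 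If $R,C$ can be chosen so that $pq\le 2\alpha-1$ while $p+q+s\le pq+1$, then the requirement to cover at least $i+2$ rows fails at $i_0=pq$, which by Theorem~\ref{thm:erasure batch} forces $t\le 2\alpha-2$.

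The hard part is exactly this rectangle step, since its success hinges on keeping the symbol count $s$ small, and $s$ is governed by the fine structure of the Latin square. For a linear square, for instance $L(i,j)=i+j$ over $\mathbb{Z}_\alpha$, a rectangle on arithmetic-progression rows and columns satisfies $s=p+q-1$, which makes the inequality explicit and also pins down the threshold: only once $\alpha$ is large enough does a rectangle with $pq\le 2\alpha-1$ become dense enough relative to its point set, and this is where $\alpha\ge6$ enters. The two smallest orders fall outside this regime and must be treated separately; I would establish $t=9$ for $\alpha=5$ and $t=8$ for $\alpha=4$ by a finite verification, exhibiting an explicit family of $t+1$ blocks that violates the $i+2$ bound and checking that no set of at most $t$ blocks does. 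Identifying the rectangle that realizes $2\alpha-2$ uniformly for $\alpha\ge6$, together with these two exhaustive small-order checks, is where essentially all of the effort lies.
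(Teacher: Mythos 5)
You should first know that the paper itself offers no proof of this theorem: it says only ``Similarly to Theorems~\ref{thm:FRB from TD} and~\ref{thm:TD(3)} one can prove the following result,'' so the comparison is against that sketch. Your overall strategy is exactly the intended one (read the parameters off the design, verify the covering condition of Theorem~\ref{thm:erasure batch} for small $i$ to get the lower bounds, and exhibit a bad set of columns for the upper bounds), and the parts you actually carry out are sound: the $K_{\alpha,\alpha}$ reading of $\text{TD}(2,\alpha)$ with the $K_{2,2}$ obstruction correctly gives $t=3$ in the first item, and the greedy count $3i-\binom{i}{2}\ge i+2$ for $i\le 4$ correctly gives $t\ge 4$ in the second.

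The gap is in the step you yourself flag as carrying all the weight: the upper bound $t\le 2\alpha-2$. First, your rectangle construction is analyzed only for the linear square $L(i,j)=i+j$, whereas the theorem is stated for an arbitrary $\text{TD}(3,\alpha)$; for a general Latin square a $p\times q$ rectangle only guarantees $s\ge\max(p,q)$ from below but can have $s$ as large as $\min(pq,\alpha)$, so the inequality $p+q+s\le pq+1$ is not available without controlling the structure of the square. Second, even in the linear case your own inequality $2(p+q)-1\le pq+1$, i.e.\ $(p-2)(q-2)\ge 2$, has smallest solution $pq=12$ (at $p=3$, $q=4$), and $12\le 2\alpha-1$ forces $\alpha\ge 7$; so the construction as described says nothing at $\alpha=6$, contrary to your claim that it ``pins down the threshold'' $\alpha\ge 6$. (It can be patched for the linear square, e.g.\ by deleting a corner cell of the $3\times 4$ arithmetic-progression rectangle to get $11$ blocks covering $12$ points, but that is not in your proposal, and it still does not address general $\text{TD}(3,\alpha)$.) Finally, for $\alpha=4,5$ the ``finite verification'' must be performed over all main classes of Latin squares of those orders (there are two of each), or else the statement must be read as referring to a particular design; as written, the exact values $t=8$ and $t=9$ are asserted but not derived. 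So the first item is proved, the lower bound of the second item is proved, but the upper bound and the small-order values remain a plan rather than a proof.
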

%=====================================================================================================================================
\vspace{-0.5cm}
\section{Conclusion and Future Work}
%\section{Conclusion}
\label{sec:conclusion}
This paper introduces two new families of erasure codes for distributed storage systems, namely fractional repetition batch codes and uniform erasure combinatorial batch codes. FRB codes have the properties of both FR and batch codes allowing for uncoded repairs of failed system nodes and parallel reads of subsets of data symbols. Uniform ECBCs have the properties of combinatorial batch codes even in presence of system nodes failures.
We provide the matrix description of these codes and present constructions based on transversal designs and affine planes.

We conclude with a list of open problems for future research.
\begin{enumerate}
  \item Find an upper bound on $t$ and $M$ given  other parameters $\{n,\rho,\alpha,k\}$ for an FRB code;
  \item Given  the set of parameters $\{n,\rho,\alpha,k\}$, construct a $\rho-(n,M,k,\alpha,t)$ FRB code with the maximum $M$ and $t$;
  \item Find the exact values of $t$ for FRB codes and ECBCs based on transversal designs and affine planes.
 \end{enumerate}
%\vspace{-0.4cm}
\begin{acknowledgement}
The author thanks Tuvi Etzion and Mark Silberstein for the valuable discussions.
The author also wishes to thank COST Action IC1104 "Random Network Coding and Designs over GF(q)" on travel support to present this work.
\end{acknowledgement}
% \vspace{-0.4cm}
\nocite{*}
% Use \nocite to cite every item in the `bibtex' environment
% or in the bibliography database file (with `bib' extension)
% You may add references in three different ways.
% The first method shall be required for final submission.
%
% 1. Write the bibtex entries inside the `bibtex' environment;
%    this will produce a `bib' file which can be processed by BibTeX
%    (most TeX/LaTeX editors compile it automatically).
%
% 2. Use the `standard' \bibliography commands to produce references
%    based on your BibTeX database files (with `bib' extension).
%
% 3. Include references using the `thebibliography' environment.
%
% Examples follow.
% 1st method: `bibtex' environment
% Observe that `eprint' automatically adds the string `arXiv', so remove
% it from the output of arXivToBibTeX - http://www.crcg.de/arXivToBibTeX
\begin{bibtex}
@unpublished{batchTD,
author       = {N. Silberstein and A. G\'al},
title        = {Optimal combinatorial batch codes based on block designs},
year         = {2013},
eprint       = {1312.5505},
}
@unpublished{FR_TD,
author       = {N. Silberstein and T. Etzion},
title        = {Optimal fractional repetition codes},
year         = {2014},
eprint       = {1401.4734},
}
@article{Paterson,
year={2009},
journal={Advances in Mathematics of Communications},
volume={3},
number={1},
title={Combinatorial batch codes},
publisher={Springer-Verlag},
author={Paterson, Maura B. and Stinson, Douglas R. and Wei, Ruizhong},
pages={13-27},
%doi = {10.1007/s00222-005-0473-9},
}

@ARTICLE{dimakis,
  author = {Dimakis, A.G. and Godfrey, P.B. and Wu, Y. and Wainwright, M.J. and
	Ramchandran, K.},
  title = {Network Coding for Distributed Storage Systems},
  journal = {Information Theory, IEEE Trans. on},
  year = {2010},
  volume = {56},
  pages = {4539-4551},
  number = {9},
  %doi = {10.1109/TIT.2010.2054295},
  issn = {0018-9448},
  keywords = {error correction codes;network coding;storage area networks;data centers;distributed
	storage systems;encoded fragments;erasure coded system;network coding;peer-to-peer
	storage systems;regenerating codes;repair bandwidth;wireless networks;Bandwidth;Communication
	system control;Computer science;Distributed databases;Encoding;Helium;Maintenance
	engineering;Network coding;Peer to peer computing;Redundancy;Statistics;Telecommunication
	network reliability;Wireless networks;Distributed storage;network
	coding;peer-to-peer storage;regenerating codes}
}

@ARTICLE{RSK11,
  author = {Rashmi, K. V. and Shah, N.B. and Kumar, P.V.},
  title = {Optimal Exact-Regenerating Codes for Distributed Storage at the MSR
	and MBR Points via a Product-Matrix Construction},
  journal = {Information Theory, IEEE Trans. on},
  year = {2011},
  volume = {57},
  pages = {5227-5239},
  number = {8},
  %doi = {10.1109/TIT.2011.2159049},
  issn = {0018-9448},
  keywords = {codes;matrix algebra;storage management;distributed storage codes;erasure
	codes;minimum bandwidth regenerating codes;minimum storage regenerating
	codes;optimal exact-regenerating codes;product-matrix construction;product-matrix
	framework;Bandwidth;Context;Encoding;Joining processes;Maintenance
	engineering;Symmetric matrices;Systematics;Distributed storage;interference
	alignment;network coding;node repair;partial data recovery;product-matrix
	framework;regenerating codes}
}

@inproceedings{DRWS11,
  author = {A.~G.~Dimakis and K.~Ramchandran and Y.~Wu and C.~Suh},
  title = {A survey on network codes for distributed storage},
  booktitle = {Proc. of the IEEE},
  year = {2011},
  pages = {476--489},
  owner = {AnkitSingh},
  timestamp = {2013.10.10}
}

@inproceedings{ElrRam,
  author = {El Rouayheb, S. and Ramchandran, K.},
  title = {{F}ractional repetition codes for repair in distributed storage systems},
  booktitle = {Proc. 48th Annual Allerton Conf. on Communication, Control, and Computing
	(Allerton)},
  year = {2010},
  pages = {1510 -1517},
 % address = {Urbana-Champaign, IL},
  month = {Sep.}
}

@inproceedings{dress,
  author    = {Sameer Pawar and
               Nima Noorshams and
               El Rouayheb, S. and
               Kannan Ramchandran},
  title     = {DRESS codes for the storage cloud: Simple randomized constructions},
  booktitle = {Proc. 2011 IEEE Int. Symp. on Information Theory, ISIT 2011},
  year      = {2011},
  pages     = {2338-2342},
  %ee        = {http://dx.doi.org/10.1109/ISIT.2011.6033980},
  %crossref  = {DBLP:conf/isit/2011},
  %bibsource = {DBLP, http://dblp.uni-trier.de}
}

@ARTICLE{SRKR12_transfer,
  author = {Shah, N.B. and Rashmi, K. V. and Kumar, P.V. and Ramchandran, K.},
  title = {Distributed Storage Codes With Repair-by-Transfer and Nonachievability
	of Interior Points on the Storage-Bandwidth Tradeoff},
  journal = {Information Theory, IEEE Trans. on},
  year = {2012},
  volume = {58},
  pages = {1837-1852},
  number = {3},
  %doi = {10.1109/TIT.2011.2173792},
  issn = {0018-9448},
  keywords = {Reed-Solomon codes;source coding;Reed-Solomon codes;arithmetic operations;distributed
	storage codes;exact-repair code;helper node pooling;k nodes;repair-by-transfer;storage-bandwidth
	tradeoff;Bandwidth;Complexity theory;Distributed databases;Encoding;Joining
	processes;Maintenance engineering;Systematics;Distributed storage;minimum
	bandwidth;node repair;regenerating codes;storage versus repair-bandwidth
	tradeoff}
}

@article{batchBounds,
year={2012},
journal={Advances in Mathematics of Communications},
volume={3},
number={1},
title={Combinatorial Batch Codes: A Lower Bound and Optimal Constructions},
publisher={Springer-Verlag},
author={Bhattacharya, Srimanta and Ruj, Sushmita and Roy, Bimal K.},
pages={165-174}
%doi = {10.1007/s00222-005-0473-9},
}

@inproceedings{batchAp,
  author = {Ishai, Yuval and Kushilevitz, Eyal and Ostrovsky, Rafail and Sahai, Amit},
  title = {Batch codes and their applications},
  booktitle = {Proc. 36th annual ACM symp. on Theory of computing STOC '04},
  year = {2004},
  pages = {262-271},
  }

%@inproceedings{batchAp,
%  author    = {Ishai, Yuval and Kushilevitz, Eyal and Ostrovsky, Rafail and Sahai, Amit},
%  title     = {Batch codes and their applications},
%  booktitle = {Proc. 36th annual ACM symp. on Theory of computing STOC '04},
%  year      = {2004},
%  pages     = {262-271},
%  %ee        = {http://dx.doi.org/10.1109/ISIT.2011.6033980},
%  %crossref  = {DBLP:conf/isit/2011},
%  %bibsource = {DBLP, http://dblp.uni-trier.de}
%}

@article{batchTuBu,
  author    = {Csilla Bujt{\'a}s and
               Zsolt Tuza},
  title     = {Optimal batch codes: Many items or low retrieval requirement},
  journal   = {Advances in Mathematics of Communications},
  volume    = {5},
  number    = {3},
  year      = {2011},
  pages     = {529-541},
  ee        = {http://dx.doi.org/10.3934/amc.2011.5.529},
  bibsource = {DBLP, http://dblp.uni-trier.de}
}

@inproceedings{resolvFR,
  author    = {Oktay Olmez and
               Aditya Ramamoorthy},
  title     = {Repairable replication-based storage systems using resolvable
               designs},
  booktitle = {Proc. 50th Annual Allerton Conf. on Communication, Control, and Computing
	(Allerton)},
  year      = {2012},
  pages     = {1174-1181},
  ee        = {http://dx.doi.org/10.1109/Allerton.2012.6483351},
  crossref  = {DBLP:conf/allerton/2012},
  bibsource = {DBLP, http://dblp.uni-trier.de}
}
@BOOK{MWSl78,
  title = {The theory of error-correcting codes},
  publisher = {North-Holland},
  year = {1978},
  author = {F. J. MacWilliams and N. J. A. Sloane},
  %owner = {Ankit Singh Rawat},
  %timestamp = {2013.05.13}
}

@book{Anderson,
author="I. Anderson",
title="{Combinatorial designs and tournaments.}",
publisher="Clarendon Press",
address="Oxford",
year="1997",
%series="Graduate Studies in Mathematics",
%volume=145,
}
\end{bibtex}
%
%% 2nd method: \bibliography + \bibliographystyle
%
%%\bibliography{your bibliography database file}
%%\bibliographystyle{icmcta4}
%
%% 3rd method: (discouraged) `thebibliography' environment
%
%%\begin{thebibliography}{1}
%%
%%\bibitem{CarlssonGoldfarb}
%%Carlsson, G., Goldfarb, B.: Algebraic $k$-theory of geometric groups (2013).
%%\newblock \href {http://arxiv.org/abs/1305.3349} {\path{arXiv:1305.3349}}
%%
%%\bibitem{Cortinas}
%%Cortiñas, G.: The obstruction to excision in $k$-theory and in cyclic homology.
%%\newblock Inventiones mathematicae \textbf{164}(1), 143--173 (2006).
%%\newblock \href {http://dx.doi.org/10.1007/s00222-005-0473-9}
%%  {\path{doi:10.1007/s00222-005-0473-9}}
%%
%%\bibitem{Keller}
%%Keller, B.: On differential graded categories.
%%\newblock In: Proceedings of the International Congress of Mathematicians,
%%  vol.~II, pp. 151--190. EMS, Zürich (2006)
%%
%%\bibitem{Schlichting}
%%Schlichting, M.: Higher algebraic $k$-theory.
%%\newblock In: Topics in Algebraic and Topological $K$-Theory, Lecture Notes in
%%  Math., pp. 167--241. Springer, Berlin (2011).
%%\newblock \href {http://dx.doi.org/10.1007/978-3-642-15708-0\_4}
%%  {\path{doi:10.1007/978-3-642-15708-0\_4}}
%%
%%\bibitem{Weibel}
%%Weibel, C.A.: {The $K$-book. An introduction to algebraic $K$-theory.},
%%  \emph{Graduate Studies in Mathematics}, vol. 145.
%%\newblock AMS, Providence, RI (2013)
%%
%%\end{thebibliography}

\end{document}